
\documentclass{llncs}
\usepackage{clrscode}
\usepackage{amsmath}
\usepackage{amssymb}
\usepackage{graphicx}
\usepackage{cite}
\newcommand{\ecc}{\varepsilon}

\pagestyle{plain}

\begin{document}

\title{An optimal algorithm for the weighted backup 2-center problem on a tree}

\author{Hung-Lung Wang}
\institute{Institute of Information and Decision Sciences\\ National Taipei University of Business, Taiwan}
\maketitle

\begin{abstract}
In this paper, we are concerned with the weighted backup 2-center problem on a tree. The backup 2-center problem is a kind of center facility location problem, in which one is asked to deploy two facilities, with a given probability to fail, in a network. Given that the two facilities do not fail simultaneously, the goal is to find two locations, possibly on edges, that minimize the expected value of the maximum distance over all vertices to their closest functioning facility. In the weighted setting, each vertex in the network is associated with a nonnegative weight, and the distance from vertex $u$ to $v$ is weighted by the weight of $u$. With the strategy of prune-and-search, we propose a linear time algorithm, which is asymptotically optimal, to solve the weighted backup 2-center problem on a tree. 

\end{abstract}

\section{Introduction}

Facility location problems are widely investigated in the fields of operations research and theoretical computer science. The $p$-center problem is a classic one in this line of investigation. Given a graph $G$
with positive edge lengths, a  supply set $\Sigma$
, and a  demand set $\Delta$, the $p$-center problem asks for $p$ elements $x_1,x_2, \dots, x_p$ from $\Sigma$ such that $\max_{y\in\Delta}\min_{1\leqslant i\leqslant p}d(x_i, y)$ is minimized, where $d(x, y)$ denotes 
the distance from $x$ to $y$ in $G$. Conventionally, $\Delta, \Sigma\in \{V, A\}$, where $V$ and $A$ are the set of vertices and \textit{points} of $G$, respectively. A point of a graph is a location on an edge of the graph, and is identified with the edge it locates on and the distance to an end vertex of the edge. 
The $p$-center problem in general graphs, for arbitrary $p$, is NP-hard~\cite{KaHa79}, and the best possible approximation ratio is 2, unless NP=P~\cite{Ples87}. When $p$ is fixed or the network topology is specific, many efficient algorithms were proposed~\cite{BeBh06,BeBh07,Fred91}. 

\medskip

There are many generalized formulations of the center problem, like the capacitated center problem~\cite{AlDi10} and the minmax regret center problem~\cite{AvBe97,AvBe00}. The backup center problem is formulated based on the \textit{reliability model}~\cite{Snyd06,Snyd05}, in which the deployed facilities may sometimes fail, and the demands served by these facilities have to be reassigned to functioning facilities.  More precisely, in the backup $p$-center problem, facilities may fail with \textit{failure probabilities} $\rho_1, \rho_2, \dots, \rho_p$. Given that the facilities do not fail simultaneously, the goal is to find $p$ locations that minimize the expected value of the maximum distance over all vertices to their closest functioning facility. We leave the formal problem definition to Section~\ref{section:pre}. The backup $p$-center problem is NP-hard since it is a generalized formulation of the $p$-center problem. For $p=2$, Wang et al.~\cite{WaWu09} proposed a linear time algorithm for the problem on trees. When the edges are of identical length, Hong and Kang~\cite{HoKa12} proposed a linear time algorithm on interval graphs. Recently, Bhattacharya et al.~\cite{BhDe14} consider a weighted formulation of the backup 2-center problem, in which each vertex is associated with a nonnegative weight, and the distance from vertex $u$ to $v$ is weighted by the weight of $u$. They proposed $O(n)$-, $O(n\log n)$-, $O(n^2)$-, and $O(n^2\log n)$-time algorithms on paths, trees, cycles, and unicycles, respectively, where $n$ is the number of vertices.  

\medskip

In this paper, we focus on the weighted backup 2-center problem on a tree and design a linear time algorithm to solve this problem. The algorithm is asymptotically optimal, and therefore improves the current best result on trees, given by Bhattacharya et al.~\cite{BhDe14}. The strategy of our algorithm is \textit{prune-and-search}, which is widely applied in solving distance-related problems~\cite{Megi83,Tami04}. The rest of this paper is organized as follows. In Section~\ref{section:pre}, we formally define the problem and briefly review the result given by Bhattacharya et al.~\cite{BhDe14}. Based on their observations, a further elaboration on the objective function is given. In Section~\ref{section:linear}, we design the linear time algorithm, and concluding remarks are given in Section~\ref{section:conclusion}.

\section{Preliminaries}\label{section:pre}

Let $T=(V, E)$ be a tree, on which each vertex $v$ is associated with a nonnegative weight $w_v$,  and each edge is associated with a nonnegative length. A location on an edge is identified as a \textit{point}, and the set of points of $T$ is denoted by $A$. The unique path between two points $u$ and $v$ is denoted by $\pi(u, v)$, and the \textit{distance} $d(u, v)$ between two points $u$ and $v$ is defined to be the sum of lengths of the edges on $\pi(u, v)$. The \textit{weighted distance} from vertex $u$ to point $a$ is defined as $w_ud(u, a)$. The \textit{eccentricity} of a point $a$ is defined as 
\[\ecc(a)=\max_{u\in V}w_ud(u, a),\]
and the point with minimum eccentricity is said to be the \textit{weighted center} of $T$.
Note that the weighted center of a tree is unique. 
For $U\subseteq V$, the eccentricity of a vertex $a$ w.r.t. $U$ is defined as 
\[\ecc(a, U)=\max_{u\in U}w_ud(u, a).\]
Let $a_1$ and $a_2$ be two points of $T$. The partition $\Pi(a_1, a_2)$ of $V$ is defined as $(V_1, V_2)$, where $V_1=\{v\in V\colon\, d(v, a_1)\leqslant d(v, a_2)\}$ and $V_2=V\setminus V_1$. 
A \textit{weighted 2-center} consists of two points $c_1$ and $c_2$ minimizing 
\[\ecc_2(a_1, a_2)=\max\{\ecc_{}(a_1, V_1), \ecc_{}(a_2, V_2)\},\]
where $(V_1, V_2)=\Pi(a_1, a_2)$. We denote a weighted 2-center by $\{c_1, c_2\}$. Unlike the weighted center of a tree, there may be more than one weighted 2-center. 
Now we are ready to define the weighted backup 2-center problem. 

\medskip

\begin{problem}[the weighted backup 2-center problem]
Given a tree $T=(V, E)$ and two real numbers $\rho_1$ and $\rho_2$ in $[0, 1)$, the weighted backup 2-center problem asks for a point pair $(b_1, b_2)$ minimizing $\psi_{\rho_1, \rho_2}\colon\, A\times A\to \mathbb{R}$, where
\begin{eqnarray*}
\psi_{\rho_1, \rho_2}(a_1, a_2)&\equiv&(1-\rho_1)(1-\rho_2)\max\{\ecc_{}(a_1, V_1), \ecc_{}(a_2, V_2)\}\\
&&+\rho_2(1-\rho_1)\ecc(a_1)+\rho_1(1-\rho_2)\ecc(a_2),
\end{eqnarray*}
 and $(V_1, V_2)=\Pi(a_1, a_2)$. 
\end{problem}
\medskip
To ease the presentation, we assume that $\rho_1=\rho_2=\rho$. With the assumption, minimizing $\psi_{\rho_1, \rho_2}$ is equivalent to minimizing $\psi\colon\, A\times A\to \mathbb{R}$, where 
\[\psi(a_1, a_2)\equiv(1-\rho)\max\{\ecc_{}(a_1, V_1), \ecc_{}(a_2, V_2)\}+\rho(\ecc(a_1)+\ecc(a_2)).\]
We note here that all the proofs in this paper can immediately be extended to the case where failure probabilities are different. Moreover, $b_1$ and $b_2$ are not restricted to be deployed on different points. If $b_1$ and $b_2$ are identical,  the point must be the weighted center, as shown in Proposition~\ref{proposition:b2center_identical}.

\medskip

\begin{proposition}\label{proposition:b2center_identical}
Let $(b_1, b_2)$ be a weighted backup 2-center of tree $T$. If $b_1$ and $b_2$ are identical, then it is the weighted center of $T$.
\end{proposition}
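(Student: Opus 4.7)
The plan is to evaluate $\psi$ on the diagonal, that is, on pairs of the form $(a,a)$, and observe that $\psi(a,a)$ is (up to a positive constant factor) just the ordinary eccentricity of $a$. Uniqueness of the weighted center then forces $b_1=b_2$ to coincide with it.

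First I would unwind the partition in the special case $a_1=a_2=a$. By the definition of $\Pi(a_1,a_2)$, every vertex $v\in V$ satisfies $d(v,a_1)\leqslant d(v,a_2)$, so $V_1=V$ and $V_2=\emptyset$. Hence $\ecc(a,V_1)=\ecc(a)$, while $\ecc(a,V_2)$ does not contribute (adopt the convention that a max over an empty set is $0$, or simply note that the max in the first term of $\psi$ is attained by $\ecc(a,V_1)$). Plugging into the definition of $\psi$ gives
\[
\psi(a,a)=(1-\rho)\ecc(a)+\rho\bigl(\ecc(a)+\ecc(a)\bigr)=(1+\rho)\,\ecc(a).
\]

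Next, I would use optimality of $(b_1,b_2)=(b,b)$. Since $(b,b)$ minimizes $\psi$ over all pairs in $A\times A$, it in particular minimizes $\psi$ over the diagonal, so $\psi(b,b)\leqslant \psi(a,a)$ for every point $a\in A$. Combined with the identity above, this yields $(1+\rho)\ecc(b)\leqslant (1+\rho)\ecc(a)$ for all $a\in A$. Because $1+\rho>0$, we conclude $\ecc(b)\leqslant \ecc(a)$ for every $a$, i.e., $b$ minimizes the eccentricity function on $T$. Invoking the earlier remark that the weighted center of a tree is unique, $b$ must be that weighted center.

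I do not expect a genuine obstacle: the only delicate point is the empty-side convention for $\ecc(a,\emptyset)$, which I would address in one line at the start. Everything else is a direct substitution plus the uniqueness fact about the weighted center already stated in Section~\ref{section:pre}.
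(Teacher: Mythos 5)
Your proof is correct and is essentially the same as the paper's: both reduce to the identity $\psi(a,a)=(1+\rho)\,\ecc(a)$ on the diagonal and then invoke the uniqueness of the weighted center (the paper phrases it as a contradiction via the strict inequality $\ecc(b_1)>\ecc(c)$, you phrase it directly). No gap.
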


\begin{proof}
Let $c$ be the weighted center of $T$. Suppose to the contrary that $b_1=b_2$, but $b_1\neq c$. Since $b_1=b_2$, we have $\{v\in V\colon\, d(v, b_1)\leqslant d(v, b_2)\}=V$, and therefore 
\begin{eqnarray*}
\psi(b_1, b_2)&=&(1-\rho)\max\{\ecc(b_1, V), \ecc(b_2, {\emptyset})\}+\rho(\ecc(b_1)+\ecc(b_2))\\
&=&(1-\rho)\ecc(b_1)+\rho(\ecc(b_1)+\ecc(b_2))\\
&>&(1-\rho)\ecc(c)+\rho(\ecc(c)+\ecc(c))\\
&=&\psi(c, c),
\end{eqnarray*}
which contradicts that $(b_1, b_2)$ is a weighted backup 2-center.\qed
\end{proof}

When computing a weighted backup 2-center, any vertex with weight zero can be treated as a point on an edge, and any edge $uv$ with length zero can be contracted to be a vertex with weight $\max\{w_u, w_v\}$. With  this manipulation, an instance with ``nonnegative constraints'' on vertex weights and edge lengths can be reduced to one with ``positive constraints'', and there is a straightforward correspondence between the solutions.
Therefore, in the discussion below, we may focus on the instances with positive vertex weights and edge lengths.

\subsection{A review on Bhattacharya's algorithm}\label{section:review}

\begin{figure}[t]
\centering
\includegraphics[width=2.2 in]{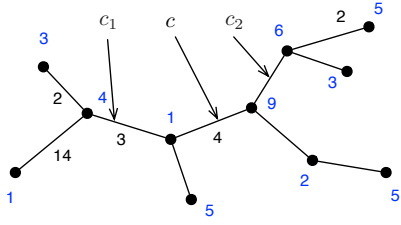}
\caption{A tree $T=(V, E)$. The number beside each vertex and each edge is the weight of the vertex and the length of the edge, respectively. Edges with no number aside are of length one.} \label{figure:tree}
\end{figure}
Throughout the rest of this paper, we use the tree given in Figure~\ref{figure:tree} as an illustrative example. In addition, weighted centers and weighted 2-centers are referred to as centers and 2-centers, respectively, for succinctness. The algorithm of Bhattacharya et al. depends on the following observations.

\begin{lemma}[See~\cite{BhDe14}]\label{lemma:solution_location}
Let	$\{c_1, c_2\}$ be any 2-center. There is a weighted backup 2-center $(b_1, b_2)$ such that $b_1$ (resp. $b_2$) lies on a path between $c_1$ (resp. $c_2$) and $c$.
\end{lemma}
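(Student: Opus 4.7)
The plan is to take any optimal weighted backup 2-center $(b_1^*, b_2^*)$ and modify each coordinate, if necessary, so that $b_i^*$ lies on $\pi(c_i, c)$ without increasing
\[\psi(a_1, a_2) = (1-\rho)\ecc_2(a_1, a_2) + \rho(\ecc(a_1) + \ecc(a_2)).\]
I would handle $b_1^*$; an identical argument for $b_2^*$ completes the proof. If $b_1^* \notin \pi(c_1, c)$, let $\tilde{b}_1$ be the unique point of $\pi(c_1, c)$ closest to $b_1^*$ in $T$---the ``gateway'' at which the subtree $S \subseteq T$ containing $b_1^*$ attaches to $\pi(c_1, c)$. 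Because $T$ is a tree, $\tilde{b}_1$ lies on $\pi(b_1^*, c)$, and every path from $b_1^*$ to a point of $\pi(c_1, c)$ passes through $\tilde{b}_1$.

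The eccentricity bound is the easy part: $\ecc$ is convex on any path in $T$ (as a maximum of the functions $w_v d(v,\cdot)$, each convex on paths) and is globally minimized at $c$, so $\ecc$ is non-increasing along $\pi(b_1^*, c)$ as one moves from $b_1^*$ toward $c$; since $\tilde{b}_1$ lies between them, we obtain $\ecc(\tilde{b}_1) \leq \ecc(b_1^*)$. For the 2-center bound I would use the identity $\ecc_2(a_1, a_2) = \max_{v \in V}\min\{w_v d(v, a_1), w_v d(v, a_2)\}$ and argue by contradiction: if some $v^*$ gives $\min\{w_{v^*} d(v^*, \tilde{b}_1), w_{v^*} d(v^*, b_2^*)\} > \ecc_2(b_1^*, b_2^*)$, a short computation forces $v^* \in S$ and $d(v^*, \tilde{b}_1) > d(v^*, b_1^*)$. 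The 2-center hypothesis then yields $\min\{w_{v^*} d(v^*, c_1), w_{v^*} d(v^*, c_2)\} \leq \ecc_2(c_1, c_2) \leq \ecc_2(b_1^*, b_2^*)$, and when the attaining center lies outside $S$ the path from $v^*$ to that center exits $S$ through $\tilde{b}_1$, so the decomposition $d(v^*, \tilde{b}_1) \leq d(v^*, c_i)$ contradicts the supposition.

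The main obstacle I anticipate is the subcase where the attaining center is $c_2$ and $c_2 \in S$, since then $\pi(v^*, c_2)$ need not visit $\tilde{b}_1$ and the clean bound is lost. I plan to rule this configuration out by arguing that $c_2 \in S$ would contradict the optimality of $\{c_1, c_2\}$ as a 2-center---a perturbation of $c_2$ along $\pi(c_2, c_1)$ toward $\tilde{b}_1$ should strictly improve $\ecc_2$---or, failing that, by replacing $\tilde{b}_1$ with the median of $\{b_1^*, c_1, c\}$ in $T$, a point of $\pi(c_1, c)$ that still dominates the relevant distances and makes the same contradiction go through.
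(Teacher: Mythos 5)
The paper does not prove this lemma itself (it is imported from~\cite{BhDe14}), so I can only judge your argument on its own terms. The projection idea is reasonable, and the first half is sound: $\tilde b_1$ (the gateway of $b_1^*$ onto $\pi(c_1,c)$) does lie on $\pi(b_1^*,c)$, and convexity of $\ecc$ on that path together with global minimality at $c$ gives $\ecc(\tilde b_1)\leqslant\ecc(b_1^*)$. Your reduction of the $\ecc_2$ step to a vertex $v^*\in S$ with $d(v^*,b_1^*)<d(v^*,\tilde b_1)$ is also correct, and the case where the 2-center bound is witnessed by $c_1$, or by $c_2\notin S$, is closed correctly because the path from $v^*$ to that center must exit $S$ through $\tilde b_1$.

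The subcase you flag, however, is a genuine gap, and neither of your proposed patches closes it. First, it is not an exotic configuration: $c_2\in S$ occurs exactly when $\tilde b_1=c$ and $b_1^*$ sits in the component of $T\setminus\{c\}$ containing $c_2$ --- for instance whenever $b_1^*$ lies strictly between $c$ and $c_2$ on $\pi(c,c_2)$ itself, or in a subtree hanging off that subpath. This is perfectly consistent with $\{c_1,c_2\}$ being an optimal 2-center, so patch one rests on a false premise: no perturbation argument can rule it out, because nothing about the 2-center is wrong in this picture; what is ``wrong'' is only the location of $b_1^*$. Patch two is vacuous: the median of $\{b_1^*,c_1,c\}$ in a tree is precisely the closest point of $\pi(c_1,c)$ to $b_1^*$, i.e.\ it \emph{is} $\tilde b_1$, so you are projecting to the same point. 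Concretely, the danger is a vertex $v^*$ deep in the $c_2$-side component that is cheaply served by $b_1^*$ (and by $c_2$) but is far from both $c=\tilde b_1$ and $b_2^*$; your contradiction needs the path $\pi(v^*,c_2)$ to pass through $\tilde b_1$, and here it does not. To repair the proof you need an additional structural step before projecting --- e.g.\ first argue (using $\ecc(c)=E_L(c)=E_R(c)$, i.e.\ Property~\ref{property:center_2center}, or a relabeling/exchange argument) that one may assume $b_1^*$ and $b_2^*$ lie on opposite sides of $c$, so that every vertex of $S$ that $\tilde b_1$ fails to serve within the old budget is picked up by $b_2^*$ or forces the stated contradiction. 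As written, the proof is incomplete in exactly the case that matters most.
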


\begin{lemma}[See~\cite{BhDe14}]\label{lemma:left_equal_right}
If $\rho > 0$, then $\ecc(b_1,V_1)= \ecc(b_2,V_2)$ holds for a weighted backup 2-center $(b_1, b_2)$ on a tree, where $(V_1, V_2)=\Pi(b_1, b_2)$.
\end{lemma}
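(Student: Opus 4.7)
My plan is to argue by contradiction. Assume $\ecc(b_1,V_1) \neq \ecc(b_2,V_2)$; by swapping labels if necessary, suppose $\ecc(b_1,V_1) > \ecc(b_2,V_2)$, and set $\delta := \ecc(b_1,V_1) - \ecc(b_2,V_2) > 0$. The goal is to exhibit a small perturbation of $(b_1,b_2)$ that strictly decreases $\psi$, contradicting optimality. By Proposition~\ref{proposition:b2center_identical} I may assume $b_1 \neq b_2$, and by Lemma~\ref{lemma:solution_location} the point $b_2$ lies on a path from some 2-center point $c_2$ to the center $c$. The natural perturbation is to slide $b_2$ along this path toward $c$, which I analyze first under the assumption $b_2 \neq c$.

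Let $b_2(t)$ be the point on $\pi(b_2,c)$ at distance $t$ from $b_2$. Two quantities need to be controlled. First, $\ecc(\cdot)$, viewed as a function of position along any path of $T$, is the pointwise maximum of affine functions, hence convex and piecewise linear; since $c$ is the unique minimizer, the one-sided slope inequality for convex functions forces the right derivative of $\ecc$ at $b_2$ in the direction of $c$ to be strictly negative, so $\ecc(b_2(t)) \leq \ecc(b_2) - \alpha t$ for some $\alpha > 0$ and all small $t$. Second, I must keep the maximum term $M(t) := \max\{\ecc(b_1, V_1(t)), \ecc(b_2(t), V_2(t))\}$, where $(V_1(t),V_2(t)) = \Pi(b_1,b_2(t))$, from exceeding $\ecc(b_1,V_1)$. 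Only vertices near the partition boundary can switch sides for small $t$: a vertex migrating from $V_1$ to $V_2(t)$ satisfies $w_v d(v,b_2(t)) \leq w_v d(v,b_1) \leq \ecc(b_1,V_1)$, while a vertex migrating from $V_2$ to $V_1(t)$ must satisfy $d(v,b_1) - d(v,b_2) < t$ and hence $w_v d(v,b_1) \leq \ecc(b_2,V_2) + Wt$ with $W := \max_v w_v$; vertices staying on their side incur distance changes of at most $t$. Exploiting the gap $\delta$, one picks $t$ small enough that $M(t) \leq \ecc(b_1,V_1)$, and since $\rho > 0$ and $\ecc(b_1)$ is unchanged, the strict drop in $\ecc(b_2(t))$ forces $\psi(b_1,b_2(t)) < \psi(b_1,b_2)$, the desired contradiction.

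The main obstacle is the degenerate case $b_2 = c$, where the ``toward-$c$'' move is unavailable. In this case I would switch to perturbing $b_1$: moving $b_1$ toward $c$ strictly drops $\ecc(b_1)$ by convexity but may inflate $\ecc(b_1,V_1)$ if an eccentric vertex of $V_1$ lies on the $b_1$-side of the moved point, and moving $b_1$ toward such an eccentric vertex shrinks $\ecc(b_1,V_1)$ but may inflate $\ecc(b_1)$. The careful part is to pick a direction along which the combined change in $\psi$ is strictly negative, balancing the two linear rates against $\rho \in (0,1)$ and the gap $\delta$, while also tracking how the partition reshapes. I expect the bookkeeping for this corner case to be the technically most involved step, though the overall geometric picture is simple: ``move the side with slack toward the center.''
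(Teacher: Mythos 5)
First, a point of reference: the paper does not prove Lemma~\ref{lemma:left_equal_right} itself --- it is imported from~\cite{BhDe14} --- so there is no in-paper proof to compare against, and your attempt must stand on its own. Your main case is sound and is the natural local-perturbation argument: when the slack-side facility $b_2$ is not at the center $c$, sliding it a distance $t$ toward $c$ strictly decreases $\ecc(b_2)$ (convexity of $\ecc$ along a path plus uniqueness of its minimizer gives a strictly negative one-sided slope), and your bookkeeping of migrating versus staying vertices correctly shows that for $t<\delta/\max_v w_v$ the term $\max\{\ecc(b_1,V_1(t)),\ecc(b_2(t),V_2(t))\}$ does not exceed $\ecc(b_1,V_1)$; since $\rho>0$, $\psi$ strictly decreases, contradicting optimality.

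Two genuine gaps remain. (i) You assert that Proposition~\ref{proposition:b2center_identical} lets you assume $b_1\neq b_2$; it does not --- it only says that \emph{if} the two points coincide then both equal $c$. That configuration genuinely occurs: on a two-vertex path with unit weights and edge length $2$, one checks that for every $\rho>1/3$ the unique optimum is $b_1=b_2=c$, where $V_1=V$ and $V_2=\emptyset$, so the claimed equality can only be read under a convention for $\ecc(\cdot,\emptyset)$ or by explicitly excluding the coincident case; your perturbation framework says nothing here. (ii) The case where the slack-side facility already sits at $c$ is only sketched, and the sketch is not yet a proof: moving $b_1$ toward $c$ can increase the dominant term $\ecc(b_1,V_1)$ at rate $w_{v^*}$ (when the vertex $v^*$ attaining it lies on the far side of $b_1$) while $\rho\,\ecc(b_1)$ decreases, and moving $b_1$ away from $c$ trades these roles, so neither single direction is guaranteed to lower $\psi$; you would need either to rule this configuration out by a global comparison (e.g., against $(c,c)$) or to exhibit a combined move of both facilities, and neither is carried out. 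As written, the argument covers only the generic case.
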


By Lemma~\ref{lemma:solution_location}, we may focus the nontrivial case where $c_1<c<c_2$. The path $\pi(c_1, c_2)$  is embedded onto the $x$-axis with each point $a$ on $\pi(c_1, c_2)$ corresponding to point $d(a, c_1)$ on the $x$-axis. For simplicity, we use $\pi(c_1, c_2)$ to denote both the set of points on this path and the corresponding set of points on the $x$-axis.
For each vertex $v$, the \textit{cost function} $f_v(x)\colon\, \pi(c_1, c_2)\to \mathbb{R}$ is defined as 
\[f_v(x)=w_vd(v, x).\]

\medskip

\noindent Clearly, $f_v$ is a V-shape function. Assume that the minimum of $f_v$ occurs at~$a_v$. Let $f^+_v\colon\, \pi(c_1, c_2)\to \mathbb{R}$ and $f^-_v\colon\, \pi(c_1, c_2)\to \mathbb{R}$ be defined as 
\[f^+_v(x)=\left\{\begin{array}{ll}f_v(x),&\mbox{ if } x\geqslant a_v\\
0, &\mbox{ otherwise,}\end{array}\right. \mbox{ and } f^-_v(x)=\left\{\begin{array}{ll}f_v(x),&\mbox{ if } x\leqslant a_v\\
0, &\mbox{ otherwise,}\end{array}\right.\]
and let the upper envelopes of $\{f^+_v\colon\, v\in V\}$ and $\{f^-_v\colon\, v\in V\}$ be denoted by $E_L$ and $E_R$, respectively. An example is given in Figure~\ref{figure:envelopes_and_quasiconvex}(a). 
\begin{figure}[t]
\centering
\includegraphics[width=4.4 in]{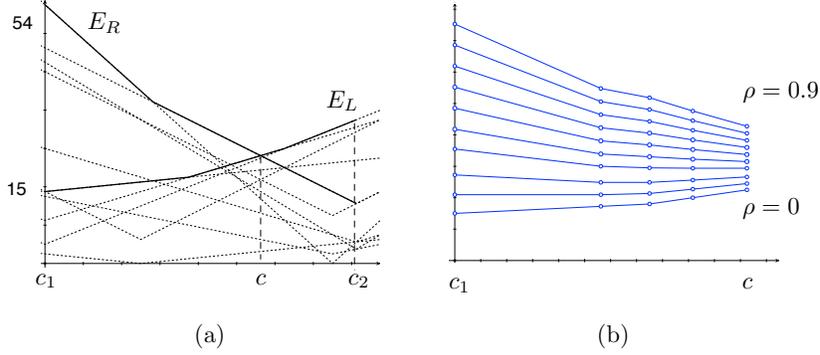}
\caption{(a) $E_L$ and $E_R$ of the tree $T$ given in Figure~\ref{figure:tree}. Dotted curves are functions $f_v$, and solid curves are $E_L$ and $E_R$. (b) Function $\psi_1$ with different failure probabilities $\rho$. The failure probability $\rho$ ranges from $0$ to $0.9$.} \label{figure:envelopes_and_quasiconvex}
\end{figure}

\bigskip

In the algorithm of Bhattacharya et al., they focus on processing the information within the path from $c_1$ to $c_2$, where $\{c_1, c_2\}$ is a 2-center satisfying the following property.
\begin{property}[See~\cite{BhDe14}]\label{property:center_2center}
For the center $c$, we have $\ecc(c)=E_L(c)=E_R(c)$. In addition, there is a 2-center $\{c_1, c_2\}$ satisfying $\ecc(c_1, V_1)=E_L(c_1)\leqslant \ecc(c)$ and $\ecc(c_2, V_2)=E_R(c_2)\leqslant \ecc(c)$, where $(V_1, V_2)=\Pi(c_1, c_2)$.
\end{property}
Property~\ref{property:center_2center} holds due to the continuity of the solution space, i.e., the set of points of $T$. 
Moreover, it can be derived from Property~\ref{property:center_2center} that for any point pair $(x_1, x_2)$ with $x_1\in \pi(c_1, c)$, $x_2\in\pi(c, c_2)$, and $E_L(x_1)=E_R(x_2)$, the partition $(V_1, V_2)=\Pi(x_1, x_2)$ satisfies $\ecc(x_1, V_1)=E_L(x_1)=E_R(x_2)=\ecc(x_2, V_2)$ since $\{c_1, c_2\}$ is a 2-center. 
As a result, Bhattacharya et al. gave an algorithm to compute a weighted backup 2-center on a tree $T$. We summarize it as in $\proc{BU2Center-Tree}$.

\medskip

\begin{codebox}
\Procname{$\proc{BU2Center-Tree}(T, \rho)$}
\li	$c\gets$ center of $T$
\li	$\{c_1, c_2\}\gets$ 2-center of $T$
\li	compute $E_L$ and $E_R$
\li	\For each bending point $x_1$
\li		\Do $x_2\gets x^*$, where $E_L(x_1)=E_R(x^*)$
\li			evaluate $\psi(x_1, x_2)$, and keep the minimum
	\End
\li	\Return $(x_1, x_2)$
\end{codebox}

\medskip

\noindent This algorithm runs in $O(n\log n)$ time, where $n$ is the number of vertices. The bottleneck is the computation of $E_L$ and $E_R$. Once $E_L$ and $E_R$ are computed, the remainder can be done in $O(n)$ time since there are $O(n)$ \textit{bending points}, at which the function $\psi(x, x^*)$ is not differentiable w.r.t. $x$. While processing $x_1$ from left to right, the corresponding $x_2$ moves monotonically to the left, and therefore a one-pass scan is sufficient to find the optimal solution. Readers can refer to~\cite{BhDe14} for details. To improve the time complexity, we elaborate on some properties of the objective function below.

\subsection{Properties}

As in~\cite{BhDe14}, the discussion below focuses on the behavior of the objective function on $\pi(c_1, c_2)$. We observe that the objective function possesses a good property (the quasiconvexity, given in Lemma~\ref{lemma:quasiconvex}) when $c_1$ and  $c_2$ satisfy certain restrictions. The existence of such a 2-center is proved in Proposition~\ref{proposition:two_center_equal}.

\begin{proposition}\label{proposition:two_center_equal}
For any tree $T=(V, E)$, there is a 2-center $\{c_1, c_2\}$ satisfying $\ecc(c_1, V_1)=\ecc(c_2, V_2)$, where $(V_1, V_2)=\Pi(c_1, c_2)$.
\end{proposition}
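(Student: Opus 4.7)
The plan is to exploit Property~\ref{property:center_2center} together with the continuity and monotonicity of $E_L$ and $E_R$. By that property, there is a 2-center $\{c_1,c_2\}$ with $c_1\leqslant c\leqslant c_2$ satisfying $\ecc(c_1,V_1)=E_L(c_1)$ and $\ecc(c_2,V_2)=E_R(c_2)$, where $(V_1,V_2)=\Pi(c_1,c_2)$. Set $R=\ecc_2(c_1,c_2)=\max\{E_L(c_1),E_R(c_2)\}$; without loss of generality I may assume $E_L(c_1)=R\geqslant E_R(c_2)$, since if the two quantities are equal there is nothing to prove.

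Keeping $c_1$ fixed, I slide the right facility from $c_2$ toward $c$. Since $E_R$ is continuous and non-increasing on $\pi(c,c_2)$, and $E_R(c)=\ecc(c)\geqslant R>E_R(c_2)$, there is a point $c_2'\in\pi(c,c_2)$ satisfying $E_R(c_2')=R=E_L(c_1)$. I claim $\{c_1,c_2'\}$ is a 2-center whose two eccentricities coincide.

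Let $(V_1',V_2')=\Pi(c_1,c_2')$. Because $c_2'\leqslant c_2$, the midpoint of the new pair lies to the left of the midpoint of the original pair, so $V_1'\subseteq V_1$ and consequently $\ecc(c_1,V_1')\leqslant\ecc(c_1,V_1)=R$. On the other hand, the vertex realizing $E_L(c_1)$ has its projection $a_v\leqslant c_1$, hence it belongs to $V_1'$ and contributes $R$, giving $\ecc(c_1,V_1')=R$. A symmetric argument shows the vertex realizing $E_R(c_2')$ lies in $V_2'$, which yields the lower bound $\ecc(c_2',V_2')\geqslant R$.

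The main obstacle is the reverse inequality $\ecc(c_2',V_2')\leqslant R$. I decompose $V_2'=V_2\cup(V_1\setminus V_1')$ and bound $f_v(c_2')$ on each piece by a short case split on whether $a_v$ lies to the left or to the right of $c_2'$. For $v\in V_2$ with $a_v\geqslant c_2'$, the bound $f_v(c_2')=f_v^-(c_2')\leqslant E_R(c_2')=R$ is immediate; for $v\in V_2$ with $a_v<c_2'$, monotonicity of $f_v^+$ combined with Property~\ref{property:center_2center} gives $f_v(c_2')\leqslant f_v(c_2)\leqslant E_R(c_2)\leqslant R$. For $v\in V_1\setminus V_1'$, I invoke $f_v(c_1)\leqslant\ecc(c_1,V_1)=R$ together with the defining inequality $a_v>(c_1+c_2')/2$: when $a_v\leqslant c_2'$ this forces $f_v(c_2')<f_v(c_1)\leqslant R$, and when $a_v>c_2'$ the bound $f_v(c_2')=f_v^-(c_2')\leqslant E_R(c_2')=R$ applies once more. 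Assembling these estimates yields $\ecc(c_2',V_2')\leqslant R$, and combined with the lower bound above we conclude $\ecc(c_1,V_1')=\ecc(c_2',V_2')=R$, establishing the proposition.
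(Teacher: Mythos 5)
Your overall strategy is the same as the paper's: keep one facility fixed and slide the other along $\pi(c_1,c_2)$ until its envelope value rises to meet the larger eccentricity, then verify that the resulting pair is still a 2-center with both eccentricities equal to $R$. Your verification via the decomposition $V_2'=V_2\cup(V_1\setminus V_1')$ and the case split on the position of $a_v$ is correct, and is in fact more explicit than the paper's contradiction argument.

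There is, however, one step whose justification does not stand as written: the existence of $c_2'\in\pi(c,c_2)$ with $E_R(c_2')=R$, which you derive from the ``continuity'' of $E_R$. At this point continuity is not available: Figure~\ref{figure:discontinuity} is precisely an example where such an envelope is discontinuous, and Lemma~\ref{lemma:continuity} establishes continuity only for a 2-center that already satisfies $\ecc(c_1,V_1)=\ecc(c_2,V_2)$ --- the very property you are trying to prove. The 2-center supplied by Property~\ref{property:center_2center} need not have continuous envelopes: each $f_v^-$ drops to $0$ immediately to the right of $a_v$, so $E_R$ is only left-continuous and, read from $c_2$ toward $c$, has upward jumps at the projections $a_v$; an intermediate-value argument could a priori jump over the level $R$. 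The gap is repairable. Since $\{c_1,c_2\}$ is a 2-center of value $R$, every vertex $v$ satisfies $f_v(a_v)\leqslant f_v(c_i)\leqslant R$ for whichever side $V_i$ contains it, and at any jump point $a$ one has $E_R(a)=\max\bigl\{\lim_{x\to a^+}E_R(x),\ \max\{f_v(a_v)\colon a_v=a\}\bigr\}$; hence no jump can carry $E_R$ from strictly below $R$ to strictly above $R$. Taking $c_2'=\sup\{x\in\pi(c,c_2)\colon E_R(x)\geqslant R\}$ then yields $E_R(c_2')=R$, after which the rest of your argument goes through unchanged.
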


\begin{proof}

Let $\{c'_1, c'_2\}$ be a 2-center, and we embed $\pi(c'_1, c'_2)$ onto the $x$-axis as in Section~\ref{section:review}. Without loss of generality assume that $\ecc(c'_1, V'_1)<\ecc(c'_2, V'_2)$, where $(V'_1, V'_2)=\Pi(c'_1, c'_2)$, and it follows that $\ecc(c^{\prime}_2, V^{\prime}_2)=E_R(c^{\prime}_2)$ due to the continuity of $\pi(c^{\prime}_1, c^{\prime}_2)$. We claim that one can have the requested 2-center by moving $c^{\prime}_1$, towards $c^{\prime}_2$ along $\pi(c^{\prime}_1, c^{\prime}_2)$, to a point $a$ such that $E_L(a)=\ecc(c'_2, V'_2)$.   Let $(U_1, U_2)=\Pi(a, c'_2)$. Clearly, $V_1\subseteq U_1$ and $U_2\subseteq V_2$, and $\ecc(c^{\prime}_2, U_2)=\ecc(c^{\prime}_2, V^{\prime}_2)=E_R(c^{\prime}_2)$.

\medskip

Suppose to the contrary that $\{a, c'_2\}$ is not the requested 2-center. It follows that either (i) $\{a, c'_2\}$ is not a 2-center, or (ii) $\ecc(a, U_1)\neq\ecc(c'_2, U_2)$. For (i), there is a vertex $u$ in $U_1\setminus V_1$ satisfying 
\[\ecc(c'_2, U_2)<\ecc(a, U_1)=f_u(a)<f_u(c'_2)\leqslant\ecc(c'_2, V_2),\]
a contradiction. For (ii), it can be derived that $\ecc(a, U_1)<\ecc(c'_2, U_2)$, which contradicts the definition of $a$.
\qed
\end{proof}

\medskip

\begin{remark}
Any 2-center $\{c_1, c_2\}$ with $\ecc(c_1, V_1)=\ecc(c_2, V_2)$ satisfies $E_L(c_1)=\ecc(c_1, V_1)=\ecc(c_2, V_2)=E_R(c_2)$. Once a 2-center is computed, $\{c_1, c_2\}$ can then be computed in linear time, based on the arguments in the proof of Proposition~\ref{proposition:two_center_equal}. 
\end{remark}

\medskip

In the rest of this paper, we assume that $\{c_1, c_2\}$ is a 2-center satisfying $\ecc(c_1, V_1)=\ecc(c_2, V_2)$, where $(V_1, V_2)=\Pi(c_1, c_2)$. 
Next, we elaborate on $E_L$ and $E_R$. As noted in~\cite{BhDe14}, both $E_L$ and $E_R$ are piecewise linear.
On a path, both $E_L$ and $E_R$ are obviously continuous and convex. It also holds on $\pi(c_1, c_2)$ in a tree, as shown in Lemma~\ref{lemma:continuity}.

\begin{figure}[t]
\centering
\includegraphics[width=1.7 in]{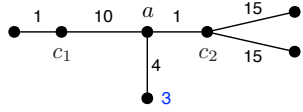}
\caption{An example which shows that $E_L$ is not continuous. All vertices are of weight $1$ except the bottom one, whose weight is $3$. Numbers beside edges are the edge lengths. Clearly, $E_L$ is not continuous at $a$.} \label{figure:discontinuity}
\end{figure}

\medskip

\begin{lemma}\label{lemma:continuity}
Let  $\{c_1, c_2\}$ be a 2-center of a tree satisfying $E_L(c_1)=\ecc(c_1, V_1)=\ecc(c_2, V_2)=E_R(c_2)$, where $(V_1, V_2)=\Pi(c_1, c_2)$. The function $E_L$ and $E_R$ are continuous and convex on $\pi(c_1, c_2)$. 
\end{lemma}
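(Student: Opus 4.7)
The plan is to realize $E_L$ on $\pi(c_1,c_2)$ as the upper envelope of a family of convex piecewise-linear functions, from which continuity and convexity both follow at once. For each vertex $v\in V$, let $a_v$ denote the minimizer of $f_v$ on $\pi(c_1,c_2)$ and put
\[\tilde h_v(x)=\max\bigl(0,\; w_v d(v,a_v)+w_v(x-a_v)\bigr).\]
Each $\tilde h_v$ is convex, continuous, and non-decreasing on $\mathbb{R}$; it coincides with $f^+_v$ for $x\geqslant a_v$ and dominates $f^+_v=0$ for $x<a_v$. Hence $\tilde E_L:=\max_{v\in V}\tilde h_v$ is convex, continuous, and dominates $E_L$ everywhere. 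The whole lemma reduces to the claim that $\tilde E_L\equiv E_L$ on $\pi(c_1,c_2)$.

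The heart of the argument will be the uniform bound
\[w_v d(v,a_v)\;\leqslant\;E_L(c_1)\;=\;E_R(c_2)\qquad\text{for every }v\in V.\]
To prove it, note that in a tree every path from $v$ to a point of $\pi(c_1,c_2)$ passes through $a_v$, so $d(v,c_1)=d(v,a_v)+(a_v-c_1)\geqslant d(v,a_v)$ and $d(v,c_2)=d(v,a_v)+(c_2-a_v)\geqslant d(v,a_v)$. If $v\in V_1$, then $E_L(c_1)=\ecc(c_1,V_1)\geqslant w_v d(v,c_1)\geqslant w_v d(v,a_v)$; if $v\in V_2$, the symmetric estimate gives $E_R(c_2)\geqslant w_v d(v,a_v)$, and Property~\ref{property:center_2center} together with the standing hypothesis closes the bound since $E_L(c_1)=E_R(c_2)$.

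Given this bound, fix $x\in\pi(c_1,c_2)$ and $v\in V$. Each $f^+_u$ is non-decreasing, so $E_L$ is non-decreasing and $E_L(x)\geqslant E_L(c_1)$. If $x\geqslant a_v$, then $\tilde h_v(x)=f^+_v(x)\leqslant E_L(x)$; if $x<a_v$, then by the monotonicity of $\tilde h_v$ and the uniform bound, $\tilde h_v(x)\leqslant \tilde h_v(a_v)=w_v d(v,a_v)\leqslant E_L(c_1)\leqslant E_L(x)$. Taking the maximum over $v$ yields $\tilde E_L(x)\leqslant E_L(x)$, which combined with the reverse inequality establishes $\tilde E_L=E_L$ on $\pi(c_1,c_2)$, so $E_L$ inherits continuity and convexity from $\tilde E_L$. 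The argument for $E_R$ is the mirror image, using the convex majorants $\tilde g_v(x)=\max(0,\,w_v d(v,a_v)-w_v(x-a_v))$ together with the monotonicity $E_R(c_2)\leqslant E_R(x)$ for $x\leqslant c_2$. The hard part of the whole plan is the uniform bound above: without the standing hypothesis on $\{c_1,c_2\}$ (as in the configuration of Figure~\ref{figure:discontinuity}), a jump of some $f^+_v$ at $a_v$ need not be absorbed by the envelope, which is precisely how continuity can fail.
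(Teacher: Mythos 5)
Your proof is correct, and it takes a genuinely different route from the paper's. The paper argues by contradiction at a hypothetical discontinuity point $a$: the function realizing the jump must satisfy $f^+_v(a)=f^-_v(a)$, i.e.\ the jump height equals $w_vd(v,a_v)$, and the offending vertex $v$ can belong neither to $V_1$ (else $\ecc(c_1,V_1)>E_L(c_1)$) nor to $V_2$ (else $\ecc(c_2,V_2)<f^+_v(c_2)\leqslant\ecc(c_2,V_2)$); convexity is then deduced separately, after continuity is in hand. You instead establish the uniform bound $w_vd(v,a_v)\leqslant E_L(c_1)$ for all $v$ up front --- using exactly the same $V_1$/$V_2$ case split that powers the paper's contradiction --- and then replace each $f^+_v$ by a globally convex, continuous, non-decreasing majorant $\tilde h_v$ whose excess over $f^+_v$ is everywhere dominated by the envelope, so that $E_L$ coincides with the upper envelope of the $\tilde h_v$. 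This yields continuity and convexity in a single stroke and cleanly repairs the paper's slightly informal step that ``convexity follows since $E_L$ is the upper envelope of half lines of positive slope,'' a phrase which, read literally, would also apply to the discontinuous configuration of Figure~\ref{figure:discontinuity}. Two cosmetic remarks: the appeal to Property~\ref{property:center_2center} is superfluous, since the lemma's hypothesis already gives $E_L(c_1)=E_R(c_2)$; and the identity $d(v,c_1)=d(v,a_v)+(a_v-c_1)$ deserves the one-line justification that in a tree the minimizer $a_v$ of $f_v$ on $\pi(c_1,c_2)$ is the point where the path from $v$ enters $\pi(c_1,c_2)$ (here the standing assumption $w_v>0$ is used).
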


\begin{proof}
Because of symmetry, we prove the lemma only for $E_L$, and we claim that $E_L$ is continuous. The convexity then follows since $E_L$ is the upper envelope of half lines of positive slope. Suppose to the contrary that  $E_L$  is not continuous. There is a point $a$, with $c_1<a<c_2$, satisfying 
$\lim_{x\rightarrow a^-}E_L(x)<E_L(a).$
Let $E_L(a)=f^+_v(a)$. Clearly, at point $a$ we have
$f^+_v(a)=f^-_v(a).$
It follows that $a\notin V_1$ since otherwise $E_L(c_1)\neq\ecc(c_1, V_1)$. However, for $a\in V_2$, we have 
\[\ecc(c_2, V_2)=\ecc(c_1, V_1)=E_L(c_1)<E_L(a)=f^+_v(a)<f^+_v(c_2)\leqslant\ecc(c_2, V_2),\]
which leads to a contradiction.
\qed
\end{proof}
\medskip

\noindent Notice that Lemma~\ref{lemma:continuity} does not hold for all 2-centers. An example is given in Figure~\ref{figure:discontinuity}.

\bigskip

By Lemma~\ref{lemma:left_equal_right}, an optimal solution occurs at the point pair $(a, a^*)$ satisfying $a\in\pi(c_1, c)$, $a^*\in\pi(c, c_2)$, and $E_L(a)=E_R(a^*)$. Thus, we may focus on the single variable function ${\psi}_1\colon\, \pi(c_1, c)\to\mathbb{R}$, defined as 
\[\psi_1(a)=\psi(a, a^*).\]
To design an efficient algorithm, we expect some good properties on $\psi_1$. 
Unlike the eccentricity function $\ecc$, function $\psi_1$ is not convex on $\pi(c_1, c)$ (see Figure~\ref{figure:envelopes_and_quasiconvex}(b)). Fortunately, it is quasiconvex. Moreover, for any interval $[c_1, x^*]$ with $c_1\leqslant x^*\leqslant c$, if there is no more than one point at which $\psi_1$ attains the minimum, then $\psi_1$ is strictly quasiconvex on $[c_1, x^*]$ 
(see Lemma~\ref{lemma:quasiconvex}).

\begin{lemma}[strict quasiconvexity]\label{lemma:quasiconvex}
For $a_1< a_2< a_3$, the following statements hold.
\begin{itemize}
\item $\psi_1(a_3)<\psi_1(a_2)$ implies $\psi_1(a_2)<\psi_1(a_1)$;
\item $\psi_1(a_1)<\psi_1(a_2)$ implies $\psi_1(a_2)<\psi_1(a_3)$.
\end{itemize}
\end{lemma}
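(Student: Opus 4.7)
The plan is to reparameterize by the common eccentricity level $u=E_L(a)=E_R(a^*)$, show that $\psi_1$ rewritten as a function $\tilde\psi_1$ of $u$ is convex, and then translate the convexity back into the two strict-quasiconvexity bullets via the strict monotonicity of $u$ in $a$.

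First I would simplify $\psi_1$ using the properties of $\{c_1,c_2\}$. Since $E_L$ is non-decreasing and $E_R$ is non-increasing on $\pi(c_1,c_2)$ while $E_L(c)=E_R(c)=\ecc(c)$, we have $\ecc(a)=\max\{E_L(a),E_R(a)\}=E_R(a)$ for $a\in\pi(c_1,c)$, and dually $\ecc(a^*)=E_L(a^*)$; moreover, by construction the ``max'' term in $\psi$ collapses to $E_L(a)=E_R(a^*)$. Hence
\[\psi_1(a)=(1-\rho)E_L(a)+\rho E_R(a)+\rho E_L(a^*),\]
and, setting $u=E_L(a)=E_R(a^*)$,
\[\tilde\psi_1(u)=(1-\rho)u+\rho(E_R\circ E_L^{-1})(u)+\rho(E_L\circ E_R^{-1})(u),\]
where $E_L^{-1}$ is taken on $\pi(c_1,c)$ and $E_R^{-1}$ on $\pi(c,c_2)$.

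The main step, and the chief obstacle, is establishing convexity of the two compositions in $u$. By Lemma~\ref{lemma:continuity}, $E_L$ and $E_R$ are convex on $\pi(c_1,c_2)$; being upper envelopes of positively- (resp.\ negatively-) sloped half-lines, in the non-trivial case $E_L$ is strictly increasing on $\pi(c_1,c)$ and $E_R$ is strictly decreasing on $\pi(c,c_2)$. A short calculation on each linear piece (equivalently, the identity $(f^{-1})''=-f''/(f')^3$) shows $E_L^{-1}$ is concave and $E_R^{-1}$ is convex. The standard preservation rules (convex non-increasing composed with concave is convex; convex non-decreasing composed with convex is convex) then give convexity of $E_R\circ E_L^{-1}$ and $E_L\circ E_R^{-1}$, so $\tilde\psi_1$ is a sum of three convex functions. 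This step crucially relies on Lemma~\ref{lemma:continuity}, which itself requires $E_L(c_1)=E_R(c_2)$; for a generic 2-center $E_L$ may be discontinuous (Figure~\ref{figure:discontinuity}) and the argument collapses.

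Finally, strict monotonicity of $u=E_L(a)$ turns $a_1<a_2<a_3$ into $u_1<u_2<u_3$. Convexity of $\tilde\psi_1$ yields
\[\tilde\psi_1(u_2)\leqslant \lambda\tilde\psi_1(u_1)+(1-\lambda)\tilde\psi_1(u_3),\quad \lambda=\frac{u_3-u_2}{u_3-u_1}\in(0,1).\]
Under the hypothesis $\psi_1(a_3)<\psi_1(a_2)$, substituting this strict inequality into the bound and rearranging forces $\tilde\psi_1(u_2)<\tilde\psi_1(u_1)$, i.e.\ $\psi_1(a_2)<\psi_1(a_1)$, proving the first bullet. The second bullet follows by the symmetric manipulation starting from $\psi_1(a_1)<\psi_1(a_2)$.
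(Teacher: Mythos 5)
Your proof is correct, and it rests on the same key ingredient as the paper's --- the convexity of $E_L$ and $E_R$ on $\pi(c_1,c_2)$ from Lemma~\ref{lemma:continuity} --- but the argument is organized genuinely differently. The paper stays in the variable $a$: it rewrites $\psi_1(a_3)<\psi_1(a_2)$ as the bound $\frac{1-\rho}{\rho}<\frac{E_R(a_2)-E_R(a_3)+E_L(a^*_2)-E_L(a^*_3)}{E_L(a_3)-E_L(a_2)}$ and then transports this bound from $[a_2,a_3]$ to $[a_1,a_2]$ by comparing secant slopes of the convex functions $E_L$ and $E_R$ on both the original interval and its image $[a^*_3,a^*_1]$. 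Your change of variables $u=E_L(a)=E_R(a^*)$ makes explicit what that difference-quotient chain is really doing: the ratio the paper manipulates is exactly (minus) the secant slope of $u\mapsto E_R(E_L^{-1}(u))+E_L(E_R^{-1}(u))$, and the paper's chain of inequalities is precisely the monotonicity of that secant slope, i.e.\ the convexity of your $\tilde\psi_1$. So you prove a slightly stronger statement (genuine convexity in the level $u$, rather than only strict quasiconvexity in $a$) and then obtain both bullets from the three-point convexity inequality; the composition rules you invoke are applied in the correct directions ($E_L^{-1}$ increasing and concave, $E_R^{-1}$ decreasing and convex, outer functions convex with the matching monotonicity). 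What each approach buys: the paper's computation is self-contained and never needs invertibility, while yours isolates a reusable structural fact and makes the role of Lemma~\ref{lemma:continuity} transparent. Two points to tidy up, neither of which affects correctness: the inverses require $E_L$ to be strictly increasing on $[c_1,c]$ and $E_R$ strictly decreasing on $[c,c_2]$, which holds because these are upper envelopes of half-lines of strictly positive (resp.\ negative) slope and are positive on those subpaths after the paper's reduction to positive weights and lengths; and since every function involved is only piecewise linear, the identity $(f^{-1})''=-f''/(f')^3$ should be read as shorthand for the corresponding secant-slope (per-linear-piece) argument.
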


\begin{proof}
We prove only the statement that $\psi_1(a_3)<\psi_1(a_2)$ implies $\psi_1(a_2)<\psi_1(a_1)$. The other statement can be proved in a similar way. With the assumption that $\psi_1(a_3)<\psi_1(a_2)$, we have
\[(1-\rho)E_L(a_3)+\rho(E_R(a_3)+E_L(a^*_3))<(1-\rho)E_L(a_2)+\rho(E_R(a_2)+E_L(a^*_2)),\]
and therefore
\begin{eqnarray*}
\frac{1-\rho}{\rho}&<&\frac{E_R(a_2)-E_R(a_3)+E_L(a^*_2)-E_L(a^*_3)}{E_L(a_3)-E_L(a_2)}\\
&=&\frac{E_R(a_2)-E_R(a_3)}{E_L(a_3)-E_L(a_2)}+\frac{E_L(a^*_2)-E_L(a^*_3)}{E_R(a^*_3)-E_R(a^*_2)}\\
&=&\frac{(E_R(a_2)-E_R(a_3))/d(a_2, a_3)}{(E_L(a_3)-E_L(a_2))/d(a_2, a_3)}+\frac{(E_L(a^*_2)-E_L(a^*_3))/d(a^*_2, a^*_3)}{(E_R(a^*_3)-E_R(a^*_2))/d(a^*_2, a^*_3)}\\
&\underset{Lemma~\ref{lemma:continuity}}{\leqslant}&\frac{(E_R(a_1)-E_R(a_2))/d(a_1, a_2)}{(E_L(a_2)-E_L(a_1))/d(a_1, a_2)}+\frac{(E_L(a^*_1)-E_L(a^*_2))/d(a^*_1, a^*_2)}{(E_R(a^*_2)-E_R(a^*_1))/d(a^*_1, a^*_2)}\\
&=&\frac{E_R(a_1)-E_R(a_2)+E_L(a^*_1)-E_L(a^*_2)}{E_L(a_2)-E_L(a_1)}.
\end{eqnarray*}
Thus, $\psi_1(a_2)<\psi_1(a_1)$.\qed
\end{proof}
We note here that Lemma~\ref{lemma:quasiconvex} holds for function $\psi_2\colon\, \pi(c, c_2)\to\mathbb{R}$ in a symmetric manner, where 
\[\psi_2(a)=\psi(a^*, a).\]
This property will be used in designing our algorithm, and its proof is similar to that of Lemma~\ref{lemma:quasiconvex}.

\section{A linear time algorithm}\label{section:linear}

The bottleneck on the time complexity of the algorithm of Bhattacharya et al. is the computation of $E_L$ and $E_R$. Fortunately, due to the strict quasiconvexity of $\psi_1$ and the piecewise linearity of $E_L$ and $E_R$, one can apply the strategy of prune-and-search~\cite{Megi83,Tami04} to obtain an optimal solution in linear time. The quasiconvexity of a function $f$ implies that a local minimum of $f$ is the global minimum of $f$, and the idea of the prune-and-search algorithm is to search the local minimum over an interval $[\lambda_1, \lambda_2]$, which is guaranteed to contain the solution. In the search procedure, $[\lambda_1, \lambda_2]$ is recursively reduced to a subinterval, and once it is reduced, the size of the instance can also be pruned with a fixed proportion.
In more detail, the following steps are executed in each recursive call:
\begin{enumerate}
\item\label{step:choose_point}  Choose a point $t$ in $[\lambda_1, \lambda_2]$ appropriately. Initially, $[\lambda_1, \lambda_2]=[c_1, c]$. 
\item\label{step:search_direction} Determine whether $t<b_1$, $t=b_1$, or $t>b_1$.
\item\label{step:discard}  Depending on the result of step~\ref{step:search_direction}, update $[\lambda_1, \lambda_2]$, and discard a subset of vertices without affecting the local optimality of $b_1$ over the updated interval.
\end{enumerate}
During the execution, we claim that the following invariant is maintained. Recall that for any point $a$ in $\pi(c_1, c)$, $a^*$ is a point in $\pi(c, c_2)$ such that $E_L(a)=E_R(a^*)$. 

\begin{claim}[invariant]
In each recursive call, the $6$-tuple $(U_1, U_2, U_3, U_4, \lambda_1, \lambda_2)$ satisfies 
\begin{itemize}
\item $b_1\in[\lambda_1, \lambda_2]$;
\item for each $a\in [\lambda_1, \lambda_2]$, $E_L(a)=\max_{v\in U_1}f^+_v(a)$, $E_R(a)=\max_{v\in U_2}f^-_v(a)$, $E_L(a^*)=\max_{v\in U_3}f^+_v(a^*)$, and $E_R(a^*)=\max_{v\in U_4}f^-_v(a^*)$.
\end{itemize}
\end{claim}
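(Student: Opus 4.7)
The plan is induction on the depth of the recursion, with the three algorithmic steps of each call supplying the inductive step. For the base case, the initial call has $[\lambda_1,\lambda_2]=[c_1,c]$ and $U_1=U_2=U_3=U_4=V$; containment $b_1\in[c_1,c]$ is immediate from Lemma~\ref{lemma:solution_location} applied to the 2-center $\{c_1,c_2\}$ fixed at the start of Section~\ref{section:linear}, and the four envelope identities hold by the very definitions of $E_L$ and $E_R$ as upper envelopes over all of $V$.

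For the first bullet in the inductive step, assume the invariant holds on entry. By hypothesis, $E_L$ and $E_R$ on $[\lambda_1,\lambda_2]$ are exactly the envelopes of the families restricted to the stored $U_1, U_2$, so for any probe $t\in[\lambda_1,\lambda_2]$ one recovers from the stored sets both the partner $t^*$ (defined by $E_L(t)=E_R(t^*)$) and the one-sided derivatives of $\psi_1$ at $t$, which are linear combinations of slopes of the envelope-maximal half-lines at $t$ and $t^*$ read off from $U_1,\dots,U_4$. Strict quasiconvexity of $\psi_1$ (Lemma~\ref{lemma:quasiconvex}) then turns these derivative signs into an unambiguous verdict: a negative right derivative forces $b_1>t$ so setting $\lambda_1\leftarrow t$ preserves containment, a positive left derivative forces $b_1<t$ so $\lambda_2\leftarrow t$ preserves it, and the remaining case identifies $t=b_1$ and terminates the recursion. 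In every case the first bullet is re-established.

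For the second bullet, it suffices to argue that the pruning rule only discards a vertex $v$ from $U_1$ when some $u\in U_1$ pointwise dominates $f^+_v$ on the updated $[\lambda_1,\lambda_2]$; for such a $v$, removing it leaves $\max_{w\in U_1}f^+_w$ unchanged on the interval, so the identity $E_L(a)=\max_{w\in U_1}f^+_w(a)$ is preserved. Analogous arguments apply to $U_2$ on the updated interval and to $U_3, U_4$ on the image interval $[\lambda_1^*,\lambda_2^*]$. Operationally, dominations are detected by pairing members of each $U_i$ and testing where the crossing abscissa of the two half-lines in a pair falls relative to the updated interval: two half-lines of positive slope whose crossing lies outside an interval are in a pointwise dominance relation on that interval, and this is the only property of the pairing used to certify invariance.

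The chief obstacle is the coupling between the two halves of the tree through the implicit correspondence $a\mapsto a^*$: pruning of $U_3, U_4$ happens at the image $t^*$, yet $t^*$ is defined via the very envelopes we are refining. The invariant itself breaks the circularity, since on entry $U_1, U_2$ already pin down $E_L, E_R$ exactly on $[\lambda_1,\lambda_2]$, so $t^*$ and $[\lambda_1^*,\lambda_2^*]$ are computable from the stored sets. Continuity and strict monotonicity of $a\mapsto a^*$, inherited from Lemma~\ref{lemma:continuity} together with the strict monotonicity of $E_L$ on $\pi(c_1,c)$ and of $E_R$ on $\pi(c,c_2)$, then ensure that the dominance-by-break-point-outside-interval criterion transfers cleanly to $U_3, U_4$, closing the induction.
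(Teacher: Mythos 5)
Your proof is correct and follows essentially the same route as the paper, which establishes the invariant through Lemma~\ref{lemma:discarded_portion}: Property~\ref{property:prune_and_search} certifies that a vertex is discarded only when its paired partner pointwise dominates it on the updated interval, and the strict quasiconvexity of $\psi_1$ (Lemma~\ref{lemma:quasiconvex}) justifies the update of $[\lambda_1,\lambda_2]$, exactly as in your inductive step. The one presentational difference is that you prune $U_3,U_4$ over the image of $[\lambda_1,\lambda_2]$ under $a\mapsto a^*$, whereas the paper runs the symmetric procedure \textsc{ReduceInstance2} on a separately maintained interval $[\lambda_3,\lambda_4]$ containing $b_2$; your formulation is, if anything, more directly faithful to the invariant as stated.
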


\noindent The invariant was maintained by the procedure $\proc{ReduceInstance1}$, as shown later in Section~\ref{subsection:analysis}.
We note here that to guarantee the efficiency, the steps above have to be repeated in a symmetric manner ($\proc{ReduceInstance2}$). The reason will be clear after the elaboration below. To ease the presentation, we assume that the failure probability $\rho$ can be accessed globally by each procedure. 
The details of steps~\ref{step:choose_point} and~\ref{step:discard} are given in Section~\ref{subsection:step1and3}, that of step~\ref{step:search_direction} is given in Section~\ref{subsection:step2}, and the analysis of the algorithm is given in Section~\ref{subsection:analysis}.

\subsection{Guaranteeing the discarded proportion of vertices}\label{subsection:step1and3}

\begin{figure}[t]
\centering
\includegraphics[width=4.7 in]{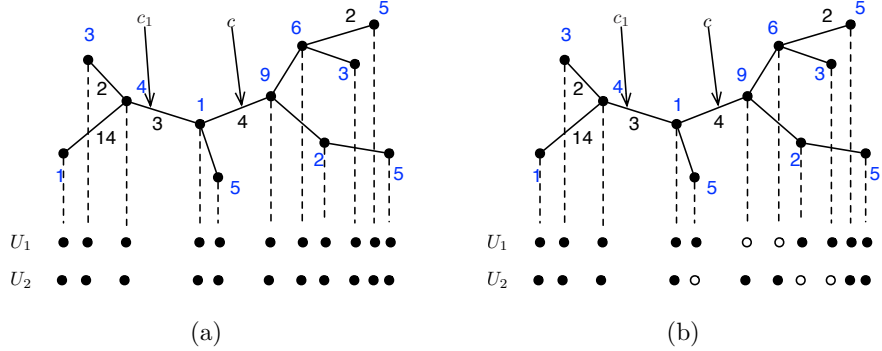}
\caption{(a) The instance before the execution of $\proc{ReduceInstance1}(V, V, V, V, c_1, c)$. In the execution, every two vertices in $U_i$, from left to right, is paired in the partition $\Pi(U_i)$, $\{t^+_{uv}\colon\, \{u, v\}\in\Pi(U_1)\}\cup\{t^-_{uv}\colon\, \{u, v\}\in\Pi(U_2)\}=\{3, -2, 49/4, 29/4, 5/2\}\cup\{-3, 2/3, 39/4, 25/4, 21/2\}$, and $t$ is chosen as $37/8$ (the mean of 3 and $25/4$).  (b) The instance after the execution of $\proc{ReduceInstance1}(V, V, V, V, c_1, c)$. Hollow circles denote the discarded elements from $U_1\cup U_2$. Notice that elements in $U_1$ and $U_2$ need not be sorted.} \label{figure:reduceInstance1}
\end{figure}

The proportion of vertices discarded at step~\ref{step:discard} depends essentially on how the point $t$ at step~\ref{step:choose_point} is chosen. According to the piecewise linearity of $E_L$ and $E_R$, 
the discarded proportion can be guaranteed based on the following simple property.
\begin{property}\label{property:prune_and_search}
Consider two linear functions $f_1(x)=a_1x+b_1$ and $f_2(x)=a_2x+b_2$ with $a_1>a_2$. We have $f_1(x)\leqslant f_2(x)$ if and only if $x\leqslant \frac{b_2-b_1}{a_1-a_2}$.
\end{property}
Similar to the idea in~\cite{Megi83,Tami04}, for $i\in\{1, 2\}$, we arbitrarily partition $U_i$ into $\lfloor|U_i|/2\rfloor$ pairs of vertices, and a single one if $|U_i|$ is odd, and let this partition be denoted by $\Pi(U_i)$. Moreover, let $t^+_{uv}=\varsigma(f^+_u, f^+_v)$ and $t^-_{uv}=\varsigma(f^-_u, f^-_v)$, where $\varsigma(f_1, f_2)$ denotes the point at which $f_1$ and $f_2$ intersect. If $t_m$,  the median of $\{t^+_{uv}\colon\, \{u, v\}\in\Pi(U_1)\}\cup\{t^-_{uv}\colon\, \{u, v\}\in\Pi(U_2)\}$,  is chosen at step~\ref{step:choose_point}, then after step~\ref{step:search_direction}, $\lceil(\lfloor|U_1|/2\rfloor+\lfloor|U_2|/2\rfloor)/2\rceil$ vertices can be discarded without affecting the optimality of $b_1$ in either $[\lambda_1, \min\{t_m, \lambda_2\}]$ or $[\max\{t_m, \lambda_1\}, \lambda_2]$. We implement the above mentioned steps by $\proc{PairPartitionMedian}$ and $\proc{DiscardVertices}$, where $\proc{PairPartitionMedian}$ pairs vertices in $U_1$ and $U_2$, respectively, and choosing the median $t_m$; based on the result and the updated interval containing the solution, i.e., $[\lambda_1, \lambda_2]$, $\proc{DiscardVertices}$ does the process of discarding vertices. An illustration is given in Figure~\ref{figure:reduceInstance1}.

\begin{codebox}
\Procname{$\proc{PairPartitionMedian}(U_1, U_2)$}
\li	$X_1\gets \Pi(U_1)$
\li	$X_2\gets\Pi(U_2)$
\li	$t_m\gets$ the median of $\{t^+_{uv}\colon\, \{u, v\}\in X_1\}\cup \{t^-_{uv}\colon\, \{u, v\}\in X_2\}$
\li	\Return $(t_m, X_1, X_2)$
\end{codebox}

\begin{codebox}
\Procname{$\proc{DiscardVertices}(X_1, X_2, t, \lambda_1, \lambda_2)$}
\li	$U_1\gets \emptyset$
\li	$U_2\gets\emptyset$
\li	\If $t\leqslant\lambda_1$ 	\>\>\>\>\>\Comment $[\lambda_1, \lambda_2]$ is updated, so either $t\leqslant \lambda_1$ or $t\geqslant \lambda_2$
\li		\Then \For each element $x$ in $X_1$
\li				\Do \If $x=\{u, v\}$ and $\varsigma(f^+_u, f^+_v)\leqslant t$
\li						\Then $U_1\gets U_1\cup\{u\colon\, f^+_u(t+\epsilon)>f^+_v(t+\epsilon)\}$
\li						\Else $U_1\gets U_1\cup \{x\}$
					\End
				\End
\li			\For each element $x$ in $X_2$
\li				\Do \If $x=\{u, v\}$ and $\varsigma(f^-_u, f^-_v)\leqslant t$
\li						\Then $U_2\gets U_2\cup\{u\colon\, f^-_u(t+\epsilon)>f^-_v(t+\epsilon)\}$
\li						\Else $U_2\gets U_2\cup \{x\}$
					\End
				\End
\li		\Else 	\For each element $x$ in $X_1$
\li					\Do \If $x=\{u, v\}$ and $\varsigma(f^+_u, f^+_v)\geqslant t$
\li							\Then $U_1\gets U_1\cup\{u\colon\, f^+_u(t-\epsilon)>f^+_v(t-\epsilon)\}$
\li							\Else $U_1\gets U_1\cup \{x\}$
						\End
				\End
\li				\For each element $x$ in $X_2$
\li						\Do \If $x=\{u, v\}$ and $\varsigma(f^-_u, f^-_v)\geqslant t$
\li								\Then $U_2\gets U_2\cup\{u\colon\, f^-_u(t-\epsilon)>f^-_v(t-\epsilon)\}$
\li								\Else $U_2\gets U_2\cup \{x\}$
							\End
						\End
				\End
			\End
		\End
\li	\Return $(U_1, U_2)$
\end{codebox}

\begin{remark}
We use $\epsilon$ to denote a sufficiently small value such that the following relation holds. For two linear function $f$ and $g$,
\[f(t+\epsilon)< g(t+\epsilon) \Leftrightarrow a_1< a_2, \text{ or } a_1=a_2 \text{ and }b_1<b_2,\]
where $f(t+\epsilon)=a_1+b_1\epsilon$ and $g(t+\epsilon)=a_2+b_2\epsilon$.
As a result, no exact value has to be specified for $\epsilon$ in $\proc{DiscardVertices}$. This convention is also adopted in Section~\ref{subsection:step2}.
\end{remark}

\subsection{Evaluating $\psi_1$}\label{subsection:step2} 

Step~\ref{step:search_direction} can be done via evaluating $\psi_1(t_m)$ and $\psi_1(t_m-\epsilon)$. Due to the quasiconvexity of $\psi_1$ (Lemma~\ref{lemma:quasiconvex}), we have that 
\begin{itemize}
\item if $\psi_1(t_m-\epsilon)>\psi_1(t_m)$, then $b_1> t_m$;
\item if $\psi_1(t_m-\epsilon)\leqslant\psi_1(t_m)$, then there exists $b_1\leqslant t_m$.
\end{itemize}
Recall that $\psi_1(a)=(1-\rho)E_L(a)+\rho(E_R(a)+E_L(a^*))$, where $E_R(a^*)=E_L(a)$. For a point $a$ on $\pi(c_1, c)$, the evaluation of $\psi_1(a)$ can be done via computing $E_L(a)$, $E_R(a)$, and $E_L(a^*)$. According to the claim of invariant, for $a\in[\lambda_1, \lambda_2]$, we have 
\begin{equation*}\label{eq:E_L_E_R}
E_L(a)=\max\{f^+_v(a)\colon\, v\in U_1\}\mbox{ and } E_R(a)=\max\{f^-_v(a)\colon\, v\in U_2\}.
\end{equation*}
Clearly, $E_L(a)$ and $E_R(a)$ can be computed in time linear to $|U_1|$ and $|U_2|$, respectively. If $a^*$ is given, then $E_L(a^*)$ can also be computed in $O(|U_4|)$ time. It remains to show how $a^*$ is determined. 

\medskip

Given a value $\xi$, since $E_R(a)=\max\{f^-_v(a)\colon\, v\in U_3\}$, the point $a$ on $\pi(c, c_2)$ satisfies $E_R(a)=\xi$ if and only if
\[a=\max\{x\colon\, f^-_v(x)=\xi, v\in U_3\}.\]
Therefore, for $a\in[\lambda_1, \lambda_2]$, determining $a^*$ can be done in $O(|U_3|)$ time, and $E_L(a^*)$ can then be computed in $O(|U_4|)$ time. Formally, the procedure of evaluating $\psi_1$ at point $a$ in $[\lambda_1, \lambda_2]$ is given as $\proc{Evaluate1}(a, U_1, U_2, U_3, U_4)$.

\medskip

\begin{codebox}
\Procname{$\proc{Evaluate1}(a, U_1, U_2, U_3, U_4)$}
\li	$\xi_1\gets \max\{f^+_v(a)\colon\, v\in U_1\}$
\li	$\xi_2\gets\max\{f^-_v(a)\colon\, v\in U_2\}$
\li	$a^*\gets\max\{x\colon\, f^-_v(x)=\xi_1, v\in U_4\}$
\li	$\xi_3\gets\max\{f^-_v(a^*)\colon\, v\in U_3\}$
\li	\Return $(1-\rho)\xi_1+\rho(\xi_2+\xi_3)$
\end{codebox}
The evaluation of $\psi_2$ at a given point $a$ can be done symmetrically, and the details are omitted. 

\subsection{The analysis of the algorithm}\label{subsection:analysis}

With the procedures given in Sections~\ref{subsection:step1and3} and~\ref{subsection:step2}, we may implement the idea given in the beginning of Section~\ref{section:linear}, which recursively reduces the size of the problem instance. The procedure is given as $\proc{ReduceInstance1}$.

\medskip

\begin{codebox}
\Procname{$\proc{ReduceInstance1}(U_1, U_2, U_3, U_4, \lambda_1, \lambda_2)$}
\li	$(t, X_1, X_2)\gets\proc{PairPartitionMedian}(U_1, U_2)$ \label{line:choose_median}
\li	$\xi_1\gets\proc{Evaluate1}(t-\epsilon, U_1, U_2, U_3, U_4)$
\li	$\xi_2\gets\proc{Evaluate1}(t, U_1, U_2, U_3, U_4)$
\li	\If $\xi_1<\xi_2$
\li		\Then	$\lambda_2\gets\min\{t, \lambda_2\}$
\li		\Else	$\lambda_1\gets\max\{t, \lambda_1\}$
	\End
\li	$(U_1, U_2)\gets\proc{DiscardVertices}(X_1, X_2, t, \lambda_1, \lambda_2)$ \label{line:discard_critical_vertices}
\li	\Return $(U_1, U_2, \lambda_1, \lambda_2)$
\end{codebox}

\medskip

\noindent An example is given in Figure~\ref{figure:reduceInstance1}. Since $t$ is chosen as the median of $\{t^+_{uv}\colon\, \{u, v\}\in \Pi(U_1)\}\cup \{t^-_{uv}\colon\, \{u, v\}\in \Pi(U_2)\}$ (line~\ref{line:choose_median}), it can be derived that a fixed proportion of vertices in $U_1\cup U_2$ are discarded after the execution of $\proc{DiscardVertices}$ (line~\ref{line:discard_critical_vertices}). Moreover, together with Property~\ref{property:prune_and_search}, it can be easily derived that for $a\in[\lambda_1, \lambda_2]$, $\psi_1(a)$ remains unchanged. We summarize these properties in Lemma~\ref{lemma:discarded_portion}.

\medskip

\begin{lemma}\label{lemma:discarded_portion}
 After the execution of $\proc{ReduceInstance1}(U_1, U_2, U_3, U_4, \lambda_1, \lambda_2)$, at least $\left\lceil\frac{\lfloor n_1/2\rfloor+\lfloor n_2/2\rfloor}{2}\right\rceil$ vertices of $U_1\cup U_2$ are discarded, where $n_1=|U_1|$ and $n_2=|U_2|$. Moreover, the procedure maintains the invariant, the 6-tuple $(U_1, U_2, U_3, U_4, \lambda_1, \lambda_2)$, in which 
\begin{itemize} 
\item $b_1\in[\lambda_1, \lambda_2]$; 
\item for $a\in[\lambda_1, \lambda_2]$, $E_L(a)=\max\{f_v^+(a)\colon\, v\in U_1\}$, $E_R(a)=\max\{f_v^-(a)\colon\, v\in U_2\}$, $E_L(a^*)=\max\{f_v^+(a^*)\colon\, v\in U_3\}$, $E_R(a^*)=\max\{f_v^-(a^*)\colon\, v\in U_4\}$.
\end{itemize}
\end{lemma}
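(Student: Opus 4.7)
The plan is to establish the counting bound and the two invariant clauses in turn, relying on three ingredients: the median selection in line~\ref{line:choose_median}, Property~\ref{property:prune_and_search}, and the strict quasiconvexity of $\psi_1$ given by Lemma~\ref{lemma:quasiconvex}. For the counting bound, set $N=\lfloor n_1/2\rfloor+\lfloor n_2/2\rfloor$, the total number of intersection points over which $t$ is chosen as median, so at least $\lceil N/2\rceil$ of them lie on each side of $t$. The if-else branch of $\proc{ReduceInstance1}$ places the updated interval entirely to one side of $t$, so afterwards either $t\leqslant\lambda_1$ or $t\geqslant\lambda_2$. In either case, at least $\lceil N/2\rceil$ pairs $\{u,v\}$ in $X_1\cup X_2$ have their intersection on the opposite side of $[\lambda_1,\lambda_2]$, which is precisely the condition that triggers the inner \textbf{then} branch of $\proc{DiscardVertices}$, and each such pair contributes one discarded vertex.

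Next, I would verify $b_1\in[\lambda_1,\lambda_2]$. By the prior invariant $b_1$ lies in the old interval, so it suffices to check that the update does not lose $b_1$. Split on the sign of $\xi_1-\xi_2=\psi_1(t-\epsilon)-\psi_1(t)$. If $\xi_1<\xi_2$, applying the second statement of Lemma~\ref{lemma:quasiconvex} to $(t-\epsilon,t,a)$ for any $a>t$ yields $\psi_1(t)<\psi_1(a)$, so the minimum cannot lie in $(t,c]$, while $\psi_1(t-\epsilon)<\psi_1(t)$ excludes $t$; hence $b_1<t$, and $\lambda_2\gets\min\{t,\lambda_2\}$ keeps $b_1$ inside. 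If $\xi_1\geqslant\xi_2$, the contrapositive of the same statement applied to $(a,t-\epsilon,t)$ for $a<t-\epsilon$ yields $\psi_1(a)\geqslant\psi_1(t-\epsilon)\geqslant\psi_1(t)$ on $[\lambda_1,t-\epsilon]$; strict quasiconvexity then forbids a minimum strictly less than $t$, giving $b_1\geqslant t$, which is preserved by $\lambda_1\gets\max\{t,\lambda_1\}$.

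For the envelope identities, the procedure does not touch $U_3$ and $U_4$, so those clauses persist by hypothesis. For $U_1$ (and $U_2$ symmetrically), consider any pair $\{u,v\}\in X_1$ from which a vertex is discarded. By Property~\ref{property:prune_and_search}, once the pair's intersection $t^+_{uv}$ lies on one side of $t$ and $[\lambda_1,\lambda_2]$ is contained in the other, exactly one of $f_u^+$ and $f_v^+$ dominates the other throughout $[\lambda_1,\lambda_2]$; the code keeps the dominator by comparing the values at $t\pm\epsilon$. Pairs whose intersection lies on the same side as the updated interval are retained intact. Hence $\max\{f_v^+(a)\colon v\in U_1\}$ is unchanged on $[\lambda_1,\lambda_2]$, preserving $E_L(a)=\max\{f_v^+(a)\colon v\in U_1\}$ on the new interval; the argument for $E_R$ and $U_2$ is identical.

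The main obstacle I expect is the $b_1\in[\lambda_1,\lambda_2]$ step, particularly the boundary case $\xi_1=\xi_2$: one must invoke the strict quasiconvexity granted by the note preceding Lemma~\ref{lemma:quasiconvex}, rather than plain quasiconvexity, to rule out a minimum strictly left of $t$ when the two probe values tie.
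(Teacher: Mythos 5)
Your proof is correct and follows essentially the same route the paper sketches for this lemma: the median choice over the $\lfloor n_1/2\rfloor+\lfloor n_2/2\rfloor$ pairwise intersection points yields the stated count of discarded vertices, Property~\ref{property:prune_and_search} gives dominance of the kept function over its discarded partner throughout the updated interval (hence preservation of $E_L$ and $E_R$ there, with the $U_3,U_4$ clauses untouched), and Lemma~\ref{lemma:quasiconvex} justifies the direction of the interval update. Your explicit treatment of the tie case $\xi_1=\xi_2$ — observing that $t$ itself is then a minimizer, so $\lambda_1\gets t$ loses no optimal $b_1$ — is exactly the point the paper leaves implicit.
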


\medskip

\noindent Notice that after the execution of $\proc{ReduceInstance1}(U_1, U_2, U_3, U_4, \lambda_1, \lambda_2)$, only a proportion of $U_1\cup U_2$ is discarded. To ensure that a fixed proportion of the instance is pruned, in a symmetric manner, one can apply a similar procedure to discard a proportion of $U_3\cup U_4$, as noted in the beginning of Section~\ref{section:linear}. We name the corresponding procedure as $\proc{ReduceInstance2}$. 

\medskip

\begin{remark}
$\proc{ReduceInstance2}$ is the same as $\proc{ReduceInstance1}$ except it evaluates $\psi_2$ instead of $\psi_1$ (see lines~2 and~3 of $\proc{ReduceInstance1}$).
\end{remark}

\medskip

With  $\proc{ReduceInstance1}$ and  $\proc{ReduceInstance2}$, one may recursively reduce the size of the instance until the instance is small enough. For small instance with both $|U_1\cup U_2|\leqslant 3$ and $|U_3\cup U_4|\leqslant 3$, we compute the solution by evaluating $\psi_1$ at all bending points in $[\lambda_1, \lambda_2]$ since $\psi_1$ is piecewise linear. We denote this procedure by $\proc{SmallInstance}$.

\medskip

The integration is given as $\proc{PruneAndSearch}$, and the procedure for computing a weighted backup 2-center is given as $\proc{Backup2Center}$. The correctness and time complexity are analyzed in Theorem~\ref{theorem:correctness_and_time}.

\begin{codebox}
\Procname{$\proc{PruneAndSearch}(U_1, U_2, U_3, U_4, \lambda_1, \lambda_2, \lambda_3, \lambda_4)$}
\li	\If $|U_1\cup U_2|\leqslant 3$ and $|U_3\cup U_4|\leqslant3$
\li		\Then	\Return $\proc{SmallInstance}(U_1, U_2, U_3, U_4, \lambda_1, \lambda_2, \lambda_3, \lambda_4)$		
\li		\Else		$(U'_1, U'_2, \lambda'_1, \lambda'_2)\gets\proc{ReduceInstance1}(U_1, U_2, U_3, U_4, \lambda_1, \lambda_2)$ \label{line:reducebegin}
\li				$(U'_3, U'_4, \lambda'_3, \lambda'_4)\gets\proc{ReduceInstance2}(U_1, U_2, U_3, U_4, \lambda_3, \lambda_4)$
\li				\Return  $\proc{PruneAndSearch}(U'_1, U'_2, U'_3, U'_4,  \lambda'_1, \lambda'_2, \lambda'_3, \lambda'_4)$ \label{line:reduceend}
	\End
\end{codebox}

\medskip

\begin{codebox}
\Procname{$\proc{Backup2Center}(T, \rho)$}
\li	$c\gets$ center of $T$
\li	$(c_1, c_2)\gets$ 2-center of $T$ with the property in Proposition~\ref{proposition:two_center_equal} 
\li	\Return $\proc{PruneAndSearch}(V, V, V, V, c_1, c, c, c_2)$
\zi \>\>\Comment $V$ is the vertex set of $T$
\end{codebox}

\medskip

\begin{theorem}\label{theorem:correctness_and_time}
The weighted backup 2-center problem on a tree can be solved in linear time by $\proc{Backup2Center}$.
\end{theorem}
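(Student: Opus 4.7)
The plan is to establish two things: first, that $\proc{Backup2Center}$ returns a weighted backup 2-center; second, that its running time is $O(n)$, which is asymptotically optimal because reading the input already requires $\Omega(n)$ time.

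For correctness I would argue by induction on the recursion depth of $\proc{PruneAndSearch}$. The preprocessing is justified by Lemma~\ref{lemma:solution_location} together with Proposition~\ref{proposition:two_center_equal} and Lemma~\ref{lemma:left_equal_right}: these guarantee an optimal pair $(b_1,b_2)$ with $b_1\in\pi(c_1,c)$, $b_2\in\pi(c,c_2)$, and $E_L(b_1)=E_R(b_2)$, so reducing to the single-variable function $\psi_1$ on $[c_1,c]$ is sound. For the inductive step I would verify the invariant stated in Lemma~\ref{lemma:discarded_portion}. By Property~\ref{property:prune_and_search}, for any pair $\{u,v\}\in\Pi(U_1)$ whose intersection $t^+_{uv}$ lies outside the updated interval $[\lambda_1,\lambda_2]$, one of $f^+_u,f^+_v$ strictly dominates the other throughout $[\lambda_1,\lambda_2]$; the dominated vertex can therefore be dropped from $U_1$ without altering $E_L$ on that interval, and symmetrically for $U_2$ and for the two sets handled by $\proc{ReduceInstance2}$. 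The base case dispatched by $\proc{SmallInstance}$ reduces to evaluating $\psi_1$ at a constant number of bending points, which is correct because $\psi_1$ is piecewise linear.

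For the running time, let $n_i=|U_1^{(i)}|+|U_2^{(i)}|+|U_3^{(i)}|+|U_4^{(i)}|$ at recursion depth $i$. The work performed at that depth consists of linear-time median selection in $\proc{PairPartitionMedian}$, two evaluations of $\psi_1$ by $\proc{Evaluate1}$ and two symmetric evaluations of $\psi_2$, and one pass through $\proc{DiscardVertices}$, together $O(n_i)$. Lemma~\ref{lemma:discarded_portion} ensures that $\proc{ReduceInstance1}$ removes a constant fraction of $U_1\cup U_2$, and the symmetric call to $\proc{ReduceInstance2}$ likewise prunes $U_3\cup U_4$, so $n_{i+1}\le \alpha n_i$ for some fixed $\alpha<1$. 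The total cost therefore telescopes to $\sum_i O(\alpha^i n)=O(n)$. The precomputation of $c$ and of a 2-center with the property of Proposition~\ref{proposition:two_center_equal} is also linear, yielding the claimed bound.

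The main obstacle is making the discard argument airtight. One must justify not only that $\psi_1$ is unchanged on the updated interval, but also that the local comparison of $\psi_1(t_m-\epsilon)$ against $\psi_1(t_m)$ correctly identifies the subinterval containing an optimum of $\psi_1$. This is where the strict quasiconvexity of Lemma~\ref{lemma:quasiconvex} is indispensable: the case $\psi_1(t_m-\epsilon)\le\psi_1(t_m)$ forbids a minimum strictly to the right of $t_m$, and the case $\psi_1(t_m-\epsilon)>\psi_1(t_m)$ symmetrically forbids one to the left, so updating $[\lambda_1,\lambda_2]$ as prescribed always retains $b_1$. Combining this with Property~\ref{property:prune_and_search} applied pairwise to the intersection median then yields the constant-fraction pruning required for the geometric analysis above.
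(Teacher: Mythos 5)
Your proposal is correct and follows essentially the same route as the paper: initialize via Lemmas~\ref{lemma:solution_location} and~\ref{lemma:left_equal_right} and Proposition~\ref{proposition:two_center_equal}, maintain the invariant of Lemma~\ref{lemma:discarded_portion} using Property~\ref{property:prune_and_search} together with the quasiconvexity of Lemma~\ref{lemma:quasiconvex}, and close with a geometric-decay recurrence. The one imprecision is your claim that $\proc{ReduceInstance1}$ and $\proc{ReduceInstance2}$ each separately remove a constant fraction of their own pair of sets --- when, say, $|U_1\cup U_2|\leqslant 3$ the first call may discard nothing --- but the combined bound $\left\lceil\frac{\lfloor n_1/2\rfloor+\lfloor n_2/2\rfloor}{2}\right\rceil+\left\lceil\frac{\lfloor n_3/2\rfloor+\lfloor n_4/2\rfloor}{2}\right\rceil\geqslant\frac{1}{8}(n_1+n_2+n_3+n_4)$, which holds whenever the recursion continues, still delivers your $n_{i+1}\leqslant\alpha n_i$ with $\alpha=7/8$.
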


\begin{proof}
Let $T=(V, E)$ with $|V|=n$. By Lemma~\ref{lemma:solution_location}, $b_1\in[c_1, c]$ and  $b_2\in [c, c_2]$, and thus $\lambda_1$, $\lambda_2$, $\lambda_3$, and $\lambda_4$ are initialized accordingly as in $\proc{Backup2Center}$. Besides, by definition, we have $E_L(a)=\max\{f^+_v(a)\colon\, v\in V\}$ and $E_R(a)=\max\{f^-_v(a)\colon\, v\in V\}$. With Lemma~\ref{lemma:discarded_portion}, the initialization of $U_1=V$ and $U_2=V$ guarantees $\psi_1(a)$, for $a\in [\lambda_1, \lambda_2]$, is computed correctly. Similar arguments hold for the initialization of $U_3=V$ and $U_4=V$. 

For the time complexity, both the center and the 2-center can be computed in $O(n)$ time~\cite{BeBh06,Megi83}. For $\proc{PruneAndSearch}$, let $n_1=|U_1|$, $n_2=|U_2|$, $n_3=|U_3|$, and $n_4=|U_4|$. Lines~\ref{line:reducebegin}--\ref{line:reduceend} of $\proc{PruneAndSearch}$ are executed if either $|U_1\cup U_2|\geqslant 4$ or $|U_3\cup U_4|\geqslant 4$. Together with Lemma~\ref{lemma:discarded_portion}, it can be derived that
\begin{equation}\label{eq:proportion}
\left\lceil\frac{\lfloor n_1/2\rfloor+\lfloor n_2/2\rfloor}{2}\right\rceil+\left\lceil\frac{\lfloor n_3/2\rfloor+\lfloor n_4/2\rfloor}{2}\right\rceil\geqslant \frac18(n_1+n_2+n_3+n_4),
\end{equation}
where the equality holds when $n_1=3$, $n_2=3$, $n_3=1$, and $n_4=1$.
As a result,  let the execution time of $\proc{PruneAndSearch}(V, V, V, V, c_1, c, c, c_2)$ be $T(N)$, where $N= |V|+ |V|+|V|+|V|=4n$. It follows from~(\ref{eq:proportion}) that
\[T(N)\leqslant T\left(\frac{7N}{8}\right)+O(N).\]
Therefore, $T(N)=O(N)=O(n)$. \qed
\end{proof}

\section{Concluding remarks}\label{section:conclusion}

In this paper, we propose a linear time algorithm to solve the weighted backup 2-center problem on a tree, which is asymptotically optimal. Based on the observations given by Bhattacharya et al.~\cite{BhDe14}, ``good properties'' of the  objective function are further derived. With these properties, the strategy of prune-and-search can be applied to solve this problem. For future research, the hardness of the backup $p$-center problem on trees is still unknown, even for the unweighted case. It worth investigation on this direction.

\subsubsection*{Acknowledgements}
The author would also like to thank Professor Kun-Mao Chao, Ming-Wei Shao, and Jhih-Heng Huang for fruitful discussions. Hung-Lung Wang was supported in part by MOST grant 103-2221-E-141-004, from the Ministry of Science and Technology, Taiwan.


\begin{thebibliography}{}
\bibitem{AlDi10} Albareda-Sambola, M., D\'{\i}az, J. A., Fern\'andez, E.: Lagrangean duals and exact solution to the capacitated $p$-center problem. Eur. J. Oper. Res. 201, 71--81 (2010).


\bibitem{AvBe97} Averbakh, I.,  Berman O.: Minimax regret $p$-center location on a network with demand uncertainty. Location Science 5, 247--254 (1997).

\bibitem{AvBe00} Averbakh, I., Berman,  O.:  Algorithms for the robust 1-center problem on a tree. Eur. J. Oper. Res. 123, 292--302 (2000).

\bibitem{BeBh06} Ben-Moshe, B., Bhattacharya, B., Shi, Q.: An optimal algorithm for the continuous/discrete weighted 2-center problem in trees. Proc. 7th Latin American Symposium on Theoretical Informatics (LNCS 3887), Valdivia, Chile, pp. 166--177 (2006).

\bibitem{BeBh07}  Ben-Moshe, B., Bhattacharya, B., Shi,  Q., Tamir, A.:  Efficient algorithms for center problems in cactus networks. Theor. Compt. Sci. 378, 237--252 (2007).

\bibitem{BhDe14} Bhattacharya, B.,  De, M., Kameda, T., Roy, S., Sokol, V., Song, Z.: Back-up 2-center on a  path/tree/cycle/unicycle. Proc.  20th International Conference on Computing and Combinatorics (LNCS 8591), Atlanta, GA, USA, pp. 417--428 (2014).

\bibitem{Fred91} Frederickson, G. N.: Parametric search and locating supply centers in trees. Proc. 2nd Workshop on Algorithms and Data Structures (LNCS 519), Ottawa, Canada, pp. 299--319 (1991).

\bibitem{HoKa12}  Hong, Y., Kang,  L.: Backup 2-center on interval graphs. Theor. Comput. Sci. 445, 25--35 (2012).

\bibitem{KaHa79} Kariv, O., Hakimi, S. L.: An algorithmic approach to network location problems, Part I. The $p$-centers. SIAM J. Appl. Math. 37, 441--461 (1979).


\bibitem{Megi83} Megiddo, N.: Linear-time algorithms for linear programming in $R^3$ and related problems, SIAM J. Comput. 12, 759--776 (1983).

\bibitem{Ples87} Plesn\'{\i}k, J.: A heuristic for the $p$-center problem in graphs. Discrete Appl. Math. 17, 263--268 (1987).

\bibitem{Snyd06} Snyder, L. V.: Facility location under uncertainty: A review. IIE Trans. 38, 537--554 (2006).

\bibitem{Snyd05} Snyder,  L. V., Daskin, M. S.: Reliability models for facility location: The expected failure cost case. Transport. Sci. 39, 400--416 (2005).

\bibitem{Tami04} Tamir,  A.: Sorting weighted distances with applications to objective function evaluations in single facility location problems. Oper. Res. Lett. 32, 249--257 (2004).

\bibitem{WaWu09} Wang, H.-L.,  Wu, B. Y.,  Chao, K.-M.: The backup 2-center and backup 2-median problems on trees. Networks 53, 39--49 (2009).



\end{thebibliography}
\end{document}